
\documentclass[letterpaper, 10pt, conference]{article}      





\usepackage{graphics} 
\usepackage{epsfig} 
\usepackage{supertabular}
\usepackage{graphicx}
\usepackage{mathtools}
\usepackage{times}
\usepackage{epstopdf}
\usepackage{csquotes}
\usepackage{color}
\usepackage{amssymb}
\usepackage{float}
\usepackage{pgfplots}
\pgfplotsset{compat=1.5}
\usepackage{subfig}
\usepackage{cite}
\newtheorem{theorem}{Theorem}[section]
\newtheorem{lemma}[theorem]{Lemma}
\newtheorem{proposition}[theorem]{Proposition}
\newtheorem{remark}[theorem]{Remark}
\newtheorem{proof}[theorem]{proof}

\usepackage{color}

\usepackage{kbordermatrix}
\def\be{\begin{equation}}
\def\ee{\end{equation}}
\def\bea{\begin{eqnarray}}
\def\eea{\end{eqnarray}}

\newcommand{\ncom}{\newcommand}
\ncom{\beqn}{\begin{eqnarray*}} \ncom{\eeqn}{\end{eqnarray*}}
\ncom{\beq}{\begin{eqnarray}} \ncom{\eeq}{\end{eqnarray}}
\makeatletter
\title{Differential passivity like properties for a class of nonlinear systems}
\author{ K. C. Kosaraju$^1$, V. Chinde$^{2}$, R. Pasumarthy$^1$, A. Kelkar$^{2}$ and  N. M. Singh$^{3}$
\thanks{ $^1$Department
		of Electrical Engineering at IIT-Madras, Chennai, India {\tt\small ee13d015, ramkrishna@ee.iitm.ac.in}.}
	\thanks{$^2$Department of Mechanical Engineering at Iowa State University, Ames, USA {\tt\small vchinde@iastate.edu, akelkar@iastate.edu}.}
	\thanks{$^3$Department of Electrical Engineering at VJTI, Mumbai, India {\tt \small nmsingh59@gmail.com}.
	}%
}
\begin{document}
\maketitle
	\thispagestyle{empty}
	\pagestyle{empty}
	\begin{abstract}
	%
	%
	%
	In this paper we derive new passive maps akin to incremental passive maps, for a class of nonlinear systems using dynamic feedback and Krasovskii's method. Further using the passive maps we present a control methodology for stabilization to a desired operating point. This work is illustrated by designing a controller for a nonlinear building heating ventilating and air conditioning (HVAC) subsystem. 
	%
%
	\end{abstract}
	\section{Introduction}
	The second method of Lyapunov has been widely used for stability analysis of dynamical systems \cite{khalil1996noninear}. This method revolves around finding a suitable Lyapunov function that decreases along the system trajectories. Further, positive definite quadratic functions of state variables are usually a good candidate for Lyapunov functions. The classical Krasovskii's method \cite{krasovskiicertain} of generating Lyapunov functions also bears a similar form in terms of velocities (instead of states) and forms a candidate Lyapunov function for stability analysis.
	Apart from stability analysis, there has been a recent interest in incremental stability analysis \cite{angeli2002lyapunov} with targeted applications such as tracking, synchronization etc. Differential analysis is used for studying incremental stability properties through variational equations and has its roots in contraction theory \cite{lohmiller1998contraction,forni2014differential }. This analysis leads to a prolonged system and inherits new passivity properties \cite{forni2013differential} which extends the traditional passivity.


Passivity theory, with its roots in electrical network analysis, has been very useful in analyzing stability of a class of nonlinear systems \cite{l2gain}. Port-Hamiltonian systems are usually passive with respect to port variables that are power conjugates (eg: voltage and current, force and velocity). These natural port variables may not always help in achieving the desired stabilization criterion \cite{ortega2001putting}, in such cases one need to find alternate input-output passive maps \cite{jeltsema2004energy, venkatraman2010energy}.  Brayton Moser framework (BM)\cite{brayton1964theory, ortega2003power} is one such methodology that provides an alternative framework in providing these new passive maps. Contrary to the total energy as storage function for port-Hamiltonian systems, in Brayton Moser  framework the storage functions are derived from power. This resulted in passive maps with differentation on one of the port variables (eg: controlled voltages and the derivatives of currents, or the controlled currents and the derivatives of the voltages).

	Recently in \cite{ICCvdotidot}, the authors have shown that for systems in Brayton Moser framework, storage functions that are constructed using Krasovskii's Lyapunov functions 
	has yielded passive maps that has differentiation on both the port variables. Similar passive maps are obtained in \cite{fox2013population} to formulate stable games in input-output framework. To establish the result, the authors exploited the property that, dynamical systems in Brayton Moser formulations are contracting \cite{van2013differential,crouch1987variational}. This led to storage functions derived from Krasovskii's-type Lyapunov functions, which resulted in new passivity  property with ``differentiation at both the ports".
	Similar kind of studies have been carried out in order to extract new passivity properties of systems, namely differential passivity \cite{van2013differential} and incremental passivity \cite{l2gain}.
	In the case of incremental passivity the authors use the contraction property of the drift vector field to derive KYP like conditions for rending a system incrementally passive. Where as differential passivity allows one to verify the incremental passivity with a pointwise criterion. Later in the paper we detail the relations between the incremental and differential passivity properties with our new passive maps.
	In \cite{kosaraju2018stability}, the authors used tools and framework such as passivity, Krasovskii functions and BM framework to prove stability of continuous time primal-dual gradient descent equations of convex optimization problem. This framework draws its limitations for considering systems with constant input matrix. \\
	{\em Contribution:}
	In this paper, Krasovskii's method is used to derive sufficient conditions 
	for a class of non-linear systems via dynamic state feedback \cite{nijmeijer1990nonlinear}. The new passive maps obtained are used to shape the storage function for controller design. The proposed framework is demonstrated on a nonlinear HVAC subsystem namely thermal zone model.

 The organization of the paper is as follows. In Section
II, we discuss the Krasovskii formulation and the stability analysis of a class of nonlinear systems. In Section III, we demonstrate the proposed methodology on a building thermal zone model.
The results and discussion is provide in Section 
IV followed by conclusions presented in Section V.

{\em Notation:} If $f(x):\mathbb{R}^n\rightarrow \mathbb{R}$, then we represent $\dfrac{\partial f}{\partial x}=f_{x}$, $\dfrac{\partial^2 f}{\partial x^2}=f_{xx}$.
\section{Krasovskii formulation}
\subsection{Motivation}\label{sec::Topologically complete RLC circuits}

The dynamics of a Topologically complete RLC circuits \cite{jeltsema2003passivity} with regulated
voltage sources in series with inductors is described by
\beq
-L\frac{di}{dt}&=& \frac{\partial P}{\partial i}-B_sV_s\nonumber\\
C\frac{dv}{dt}&=& \frac{\partial P}{\partial v} \label{com_RLC}
\eeq
where $i$, $v$ denotes the current through the inductors $L$ and voltage across the capacitors $C$, $B_s\in \mathcal{R}^{n\times m}$ represents a constant input matrix and  the Mixed potential function $P(i,v)$ is given by
\beq \label{Mixpot}
P(i,v)&=&i^\top\Gamma v+G( i)-J( v)
\eeq
where $\Gamma\in \mathbb{R}^{n\times n}$ is skew-symmetric, $G(i)\geq 0$ and $J(v)\geq 0$ (Note that $L$ and $C$ are assume to be constant). Consider the following storage function
\beq\label{com_storage}
S(i,v)=\dfrac{1}{2}\left(\dfrac{di}{dt}\right)^\top L\dfrac{di}{dt}+\dfrac{1}{2}\left(\dfrac{dv}{dt}\right)^\top C\dfrac{dv}{dt}
\eeq
\begin{proposition} \label{prop_main}\cite{ICCvdotidot}
	Let $G_{ii}=\frac{\partial^2 }{\partial i^2}G, J_{vv}=\frac{\partial^2 }{\partial v^2}J$ be positive semi-definite then we have the following.
	 The system of equations \eqref{com_RLC} representing the dynamics of a complete RLC circuit in BM form, is passive with ports $B_s^\top\dfrac{di}{dt}$ and $\dfrac{dV_s}{dt}$.
\end{proposition}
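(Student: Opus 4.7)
The plan is to show directly that $\dot S \le \left(B_s^\top \tfrac{di}{dt}\right)^{\!\top} \tfrac{dV_s}{dt}$, which is the passivity inequality for the claimed port pair. Since the storage function \eqref{com_storage} is quadratic in $\tfrac{di}{dt}$ and $\tfrac{dv}{dt}$, its time derivative naturally involves $\tfrac{d^2 i}{dt^2}$ and $\tfrac{d^2 v}{dt^2}$, so the first step is to differentiate the Brayton--Moser dynamics \eqref{com_RLC} once in time (treating $L$, $C$, $\Gamma$ and $B_s$ as constants), producing
\begin{eqnarray*}
-L\tfrac{d^2 i}{dt^2} &=& \Gamma \tfrac{dv}{dt} + G_{ii}\tfrac{di}{dt} - B_s \tfrac{dV_s}{dt},\\
C\tfrac{d^2 v}{dt^2} &=& -\Gamma\tfrac{di}{dt} - J_{vv}\tfrac{dv}{dt},
\end{eqnarray*}
where I have used $P_i = \Gamma v + G_i$ and $P_v = -\Gamma i - J_v$ (recalling $\Gamma^\top = -\Gamma$).

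Next, I would compute $\dot S$ along trajectories. Since $L$ and $C$ are constant and symmetric, $\dot S = \left(\tfrac{di}{dt}\right)^{\!\top} L \tfrac{d^2 i}{dt^2} + \left(\tfrac{dv}{dt}\right)^{\!\top} C \tfrac{d^2 v}{dt^2}$. Substituting the differentiated dynamics from above gives
\begin{eqnarray*}
\dot S &=& -\left(\tfrac{di}{dt}\right)^{\!\top}\!\! \Gamma \tfrac{dv}{dt} - \left(\tfrac{dv}{dt}\right)^{\!\top}\!\! \Gamma \tfrac{di}{dt}\\
&& - \left(\tfrac{di}{dt}\right)^{\!\top}\!\! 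G_{ii} \tfrac{di}{dt} - \left(\tfrac{dv}{dt}\right)^{\!\top}\!\! J_{vv} \tfrac{dv}{dt}\\
&& + \left(\tfrac{di}{dt}\right)^{\!\top}\!\! B_s \tfrac{dV_s}{dt}.
\end{eqnarray*}
The cross terms in the first line cancel because $\Gamma + \Gamma^\top = 0$, exactly the reason skew-symmetry of the interconnection is included in the Brayton--Moser structure.

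What remains is then $\dot S = -\left(\tfrac{di}{dt}\right)^{\!\top}G_{ii}\tfrac{di}{dt} - \left(\tfrac{dv}{dt}\right)^{\!\top}J_{vv}\tfrac{dv}{dt} + \left(B_s^\top \tfrac{di}{dt}\right)^{\!\top}\tfrac{dV_s}{dt}$, and the hypothesis $G_{ii}\succeq 0$, $J_{vv}\succeq 0$ immediately yields the desired dissipation inequality. The main technical step, and the only place where something nontrivial happens, is the cross-term cancellation via skew-symmetry of $\Gamma$; everything else is bookkeeping. I would also note in passing that if either of $G_{ii}$ or $J_{vv}$ is strictly positive definite on the relevant subspace, one in fact obtains output-strict passivity, which will likely be useful downstream for the stabilization results.
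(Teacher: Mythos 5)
Your proposal is correct and follows essentially the same route as the paper: differentiate the Brayton--Moser dynamics once in time, compute $\dot S$ for the quadratic storage function \eqref{com_storage}, observe that the cross terms cancel, and invoke $G_{ii}\succeq 0$, $J_{vv}\succeq 0$. The only cosmetic difference is that you substitute the explicit form \eqref{Mixpot} of $P$ from the outset and cancel the cross terms via $\Gamma+\Gamma^\top=0$, whereas the paper keeps the Hessian blocks $P_{ii},P_{iv},P_{vv}$ and cancels them by symmetry of mixed partials before specializing --- the two computations are identical in content.
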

\begin{proof}
Consider the storage function $S(i,v)$ defined in \eqref{com_storage}. The time derivative of $S(i,v)$ can be simplified as
	\beqn
	S_t &=& i_t^\top\left(-P_{ii}i_t-P_{vi}v_t+B_s\dfrac{dV_s}{dt}\right)\\&&+v_t^\top\left(P_{iv}i_t+P_{vv}v_t\right)\\
	&=& -i_t^\top P_{ii}i_t+v_t^\top P_{vv}v_t+i_t^\top B_s\dfrac{dV_s}{dt}
	\eeqn
	From \eqref{Mixpot} we get
	\beq\label{com_Sdot}
	\hspace{-3mm}\dfrac{d}{dt}S(i,v)\hspace{-3mm}&=&\hspace{-3mm}-i_t^\top G_{ii}i_t-v_t^\top J_{vv}v_t+i_t^\top B_s\dfrac{dV_s}{dt}
	\eeq
	From \eqref{com_RLC}, \eqref{com_storage}, \eqref{Mixpot} and \eqref{com_Sdot} it can be  proved that
	\beq
	\dfrac{d}{dt}S(i,v)&\leq&i_t^\top B_s\frac{dV_s}{dt}.
	\eeq
\end{proof}
	    \begin{remark}
Note the following in the proposition \ref{prop_main}:
\begin{itemize}
    \item[(i)]The nonlinear dynamical system given by, \eqref{com_RLC} with input $V_s=0$, is contracting with metric diag$\{L,C\}$ \cite{van2013differential,forni2013differential,crouch1987variational}.
    \item[(ii)] In deriving the result in proposition \ref{prop_main} we assumed that the input matrix $B$ as constant. The result is not obvious for a system with a state dependent input matrix $B$. That is the system represented by equations \eqref{com_RLC} is not passive with port variables $B_s(x)^\top\frac{di}{dt}$ and $\frac{dV_s}{dt}$.
\end{itemize}
	    \end{remark}
In this note, we present a methodology to derive new passive maps for systems with state dependent input matrix.
\subsection{A general nonlinear system}
Consider a nonlinear system of the form
\begin{equation}\label{gen_sys}
\dot{x}=f(x)+g(x)u
\end{equation}	
where $x\in \mathbb{R}^n$ is the state vector , $u\in \mathbb{R}^m$ ($m<n$) is the control input. $f(x):\mathbb{R}^n\rightarrow \mathbb{R}^n$ and $g(x):\mathbb{R}^n\rightarrow \mathbb{R}^m$, the input matrix are smooth functions.\\
{\em Assumption:}
\begin{itemize}
    \item[A1)]
    For a given $f(x)$ exist a symmetric positive definite matrix $M\in \mathbb{R}^{n \times n}$ satisfying
    \begin{equation}\label{A1}
        M\dfrac{\partial f}{\partial x}+\dfrac{\partial f}{\partial x}^\top M < 0
    \end{equation}
    This implies the dynamical system $\dot{x}=f(x)$ is contracting.
    \item[A2)]The full-rank left annihilator of input matrix also left annihilates its Jacobian. Let $g^{\perp}$ denotes left annihilator of the input matrix $g(x)$ i.e., $ g^{\perp}g$=0 then 
    \begin{equation}\label{A2}
        g^{\perp}\dfrac{\partial g}{\partial x}=0
      \end{equation}
    \item[A3)]$Mg(x)$ is Integrable.
\end{itemize}
\begin{proposition}\cite{lohmiller1998contraction}\label{prop::Contrating}
Consider system \eqref{gen_sys} with input $u=0$ satisfying Assumption A1. Then the resulting dynamical system is contracting.
\end{proposition}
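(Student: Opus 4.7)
The statement is the classical Lohmiller--Slotine contraction criterion specialized to a constant metric, so the plan is essentially to exhibit a differential Lyapunov function built from the matrix $M$ of Assumption A1 and show that it decays along the variational dynamics.

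First I would set up the variational (prolonged) system associated with $\dot x = f(x)$. Differentiating along an infinitesimal displacement $\delta x$ between two neighboring trajectories, one obtains
\begin{equation}
\delta \dot x \;=\; \dfrac{\partial f}{\partial x}(x)\,\delta x.
\end{equation}
Because $M$ is constant, symmetric and positive definite, I would then propose the differential storage function
\begin{equation}
V(\delta x) \;=\; \delta x^\top M\,\delta x,
\end{equation}
which satisfies $\lambda_{\min}(M)\|\delta x\|^2 \le V(\delta x) \le \lambda_{\max}(M)\|\delta x\|^2$.

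Next I would differentiate $V$ along the variational dynamics. Since $M$ has no $x$-dependence, the time derivative reduces to the quadratic form
\begin{equation}
\dfrac{d}{dt}V(\delta x) \;=\; \delta x^\top\!\left(M\dfrac{\partial f}{\partial x} + \dfrac{\partial f}{\partial x}^{\!\top} M\right)\delta x,
\end{equation}
and Assumption A1 immediately gives $\dot V < 0$ whenever $\delta x \neq 0$. Standard compactness on a ball around the trajectory then yields a uniform $\alpha>0$ with $M\tfrac{\partial f}{\partial x}+\tfrac{\partial f}{\partial x}^{\!\top}M \le -\alpha M$, so that $\dot V \le -\alpha V$ and hence $\|\delta x(t)\| \le \sqrt{\lambda_{\max}(M)/\lambda_{\min}(M)}\,e^{-\alpha t/2}\|\delta x(0)\|$.

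Finally, I would invoke the Lohmiller--Slotine theorem to translate exponential decay of all infinitesimal displacements into contraction of the flow: any two solutions $x(t)$ and $x'(t)$ can be joined by a smooth path $\gamma(s,t)$ in state space, and integrating the bound on $\|\partial_s \gamma\|_M$ over $s\in[0,1]$ shows that the $M$-induced Riemannian distance between trajectories decays exponentially, which is precisely the definition of contraction with metric $M$. The only mildly delicate point is the passage from the pointwise matrix inequality in A1 to the uniform rate $\alpha$, but this is immediate on the forward-invariant region of interest; everything else is bookkeeping, so I would keep the proof to a few lines citing \cite{lohmiller1998contraction}.
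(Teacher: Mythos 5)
Your proposal is correct and follows essentially the same route as the paper: the paper's proof takes the Krasovskii function $V=\tfrac{1}{2}\dot x^\top M\dot x$, computes $\dot V=\dot x^\top\left(M\tfrac{\partial f}{\partial x}+\tfrac{\partial f}{\partial x}^\top M\right)\dot x\le 0$, and cites Lohmiller--Slotine, which is exactly your quadratic-form argument with the particular variational solution $\delta x=\dot x$ in place of a general displacement. The additional care you take (uniform rate $\alpha$, integrating along a path to get decay of the Riemannian distance) is material the paper delegates to the citation rather than a different idea.
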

	\begin{proof}
	Consider the Krasovskii Lyapunov function
	\begin{equation}\label{kras_lypunov}
	    V(x,\dot{x})=\dfrac{1}{2}\dot{x}^\top M \dot{x}.
	\end{equation}
	Then the time derivative of \eqref{kras_lypunov} along the trajectories of \eqref{gen_sys} with $u=0$ is
	\begin{align*}
	\small
	    \dfrac{d}{dt}V &= \dot{x}^\top M\ddot{x}= \dot{x}^\top M\left(\dfrac{\partial f}{\partial x}\dot{x}\right)\\&= \dot{x}^{\top}\left(M\dfrac{\partial f}{\partial x}+\dfrac{\partial f}{\partial x}^\top M\right)\dot{x} \leq 0
	\end{align*}
	    This implies the dynamical system $\dot{x}=f(x)$ is contracting in $\mathbb{R}^n$ with respect to the metric $M$ \cite{lohmiller1998contraction}.
	    \end{proof}
	    \begin{remark}
	    In assumption A1, one can consider a state dependent Riemannian metric $M(x)$, and replace equation \eqref{A1} with
	 \begin{eqnarray*}
	 M\dfrac{\partial f}{\partial x}+\dfrac{\partial f}{\partial x}^\top M +\dot{M}< 0.
	 \end{eqnarray*}
	    \end{remark}
	Denote
	\begin{equation}\label{alpha_1}
	\alpha=-\left(g^\top g\right)^{-1}g^\top \dot{g}.
	\end{equation}
	The following lemma will be instrumental in formulating our result.
 	\begin{lemma}\label{prop::alpha}
 	Consider an input matrix $g(x)$ satisfying assumption A2. Then
	\begin{equation}\label{stab_cond}
	    	\dot{g}+g \alpha=0
	\end{equation}
	if and only if $\alpha$ satisfies \eqref{alpha_1}.
	\end{lemma}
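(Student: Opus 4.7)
The plan is to handle the two implications separately, with the forward ($\Leftarrow$) direction being a trivial consequence of premultiplication, and the reverse ($\Rightarrow$) direction requiring assumption A2. Throughout, I will use that $g(x) \in \mathbb{R}^{n \times m}$ has full column rank (since $g^{\perp}$ is a full-rank left annihilator), so that $g^\top g \in \mathbb{R}^{m \times m}$ is invertible.

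First I would prove the easy direction. Suppose $\dot{g} + g\alpha = 0$. Premultiplying by $g^\top$ gives $g^\top \dot{g} + (g^\top g)\alpha = 0$, and inverting $g^\top g$ yields exactly \eqref{alpha_1}. Notice this direction does not invoke A2.

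For the converse, substitute $\alpha = -(g^\top g)^{-1} g^\top \dot{g}$ into $\dot{g} + g\alpha$ to obtain
\begin{equation*}
\dot{g} + g\alpha = \bigl(I - g(g^\top g)^{-1} g^\top\bigr)\dot{g}.
\end{equation*}
The operator $\Pi := I - g(g^\top g)^{-1} g^\top$ is the orthogonal projector onto the complement of the column span of $g$, so the right-hand side vanishes if and only if each column of $\dot{g}$ lies in the range of $g$. This is precisely where A2 enters: since $\dot{g} = \frac{\partial g}{\partial x}\dot{x}$, assumption A2 gives
\begin{equation*}
g^{\perp}\dot{g} = g^{\perp}\frac{\partial g}{\partial x}\dot{x} = 0,
\end{equation*}
so the columns of $\dot{g}$ lie in $\ker(g^{\perp}) = \operatorname{range}(g)$. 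Hence $\Pi\dot{g} = 0$, which gives $\dot{g} + g\alpha = 0$.

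The main subtlety, and what makes the lemma nontrivial, is this reverse implication: one must recognize that $\dot{g}$ lying in the column range of $g$ is exactly the geometric content of A2, which in turn is what makes the pseudoinverse-type formula \eqref{alpha_1} produce an exact annihilation (rather than merely a least-squares residual). No other nontrivial step is required.
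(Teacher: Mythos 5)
Your proof is correct and takes essentially the same route as the paper's: the easy direction is the identical premultiplication by $g^\top$, and your orthogonal-projector argument $\dot{g}+g\alpha=\bigl(I-g(g^\top g)^{-1}g^\top\bigr)\dot{g}$ is just a geometric repackaging of the paper's step of left-multiplying by the full-rank stacked matrix $\begin{bmatrix}g^{\perp}\\ g^\top\end{bmatrix}$ and using A2 to kill the $g^{\perp}$ block. Your writeup is in fact cleaner than the paper's, which contains sign/transpose typos in the converse direction and swaps the ``if''/``only if'' labels.
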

	\begin{proof}
	The {\em only if} part of the proof: consider the following full rank matrix $\begin{bmatrix}g^{\perp}\\g^\top \end{bmatrix}$. Now left multiplying $\left(\dot{g}+g \alpha\right)$ in \eqref{stab_cond} by $\begin{bmatrix}g^{\perp}\\g^\top \end{bmatrix}$ yields
	\begin{eqnarray*}
	\begin{bmatrix}g^{\perp}\\g^\top \end{bmatrix}\left(\dot{g}+g \alpha\right)&=&\begin{bmatrix}g^{\perp}\left(\dot{g}+g \alpha\right)\\g^\top \left(\dot{g}+g \alpha\right)\end{bmatrix}\\
	&=&\begin{bmatrix}g^{\perp}\dot{g}\\g^\top \left(\dot{g}-g \left(g^\top g\right)^{-1}g^\top \dot{g}\right)\end{bmatrix}\\
	&=&\begin{bmatrix}g^{\perp}\dfrac{\partial g}{\partial x}\dot{x}\\ \left(g^\top\dot{g}-g^\top g \left(g^\top g\right)^{-1}g^\top \dot{g}\right)	\end{bmatrix}\\
	&=&\begin{bmatrix}0\\ \left(g^\top\dot{g}-g^\top \dot{g}\right)	\end{bmatrix}\\
	&=&0
	\end{eqnarray*}
	By construction $\begin{bmatrix}g^{\perp}\\g^\top \end{bmatrix}$ is full rank matrix, hence $\dot{g}+g \alpha=0$. The {\em if} part of the proof
$$
	\dot{g}+g\alpha=0 \implies g^\top g \alpha=g^\top \dot{g}\implies \alpha=-(g^\top g)^\top g^\top \dot{g}.
$$
hence
\begin{eqnarray*}
\alpha=-(g^\top g)^\top g^\top \dot{g}\iff \dot{g}+g \alpha=0.
\end{eqnarray*}
	\end{proof}
\begin{figure}
	\centering
	\includegraphics[width=1\linewidth]{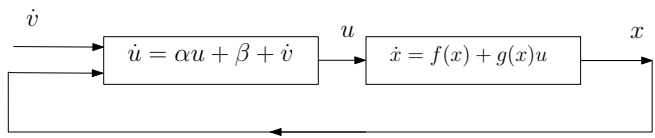}
	\caption{Interconnection of  dynamic state feedback \eqref{input_dyn}  to system \eqref{gen_sys}.}
	\label{fig:dyn_state_feedback}
\end{figure}
	Consider the following dynamic state feedback \cite{nijmeijer1990nonlinear} for system \eqref{gen_sys} (see Fig. \ref{fig:dyn_state_feedback})
	\begin{equation}\label{input_dyn}
	    \dot{u}= \alpha u+\beta +\dot{v}
	\end{equation}
	with $\alpha$ defined as in lemma \ref{prop::alpha}, $\beta=-g^\top M \dot{x}$ and $v\in \mathbb{R}^m$. The use of $\dot{v}$ in \eqref{input_dyn} rather than $v$ as new port variable will evident in the later part of the note. We have following theorem.
	\begin{theorem}\label{thm::main}
	Let the assumptions A1, A2 are satisfied. Then the system \eqref{gen_sys} together with \eqref{input_dyn} are passive with input $\dot{v}$ and output $y=g^\top M \dot{x}$.
	\end{theorem}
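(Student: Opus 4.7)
The plan is to use the Krasovskii Lyapunov function \eqref{kras_lypunov}, namely $V(x,\dot{x}) = \tfrac{1}{2}\dot{x}^\top M \dot{x}$, as storage function for the closed-loop comprising \eqref{gen_sys} and \eqref{input_dyn}, and to show directly that $\dot{V} \le y^\top \dot{v}$ with $y = g^\top M \dot{x}$. This is the same quadratic form in the velocity that proved contraction in Proposition \ref{prop::Contrating}; the dynamic extension \eqref{input_dyn} is precisely what is required so that it continues to serve as a storage function once the state-dependent input matrix $g(x)$ is present.

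First I would differentiate \eqref{gen_sys} in time to obtain
\begin{equation*}
\ddot{x} \;=\; \tfrac{\partial f}{\partial x}\dot{x} + \dot{g}\,u + g\,\dot{u},
\end{equation*}
and then substitute the dynamic feedback $\dot{u} = \alpha u + \beta + \dot{v}$ with $\beta = -g^\top M \dot{x}$. This regroups the right-hand side as
\begin{equation*}
\ddot{x} \;=\; \tfrac{\partial f}{\partial x}\dot{x} + \bigl(\dot{g} + g\alpha\bigr)u - g\,g^\top M \dot{x} + g\,\dot{v}.
\end{equation*}
The critical cancellation is now invoked: by Lemma \ref{prop::alpha} together with Assumption A2, the choice of $\alpha$ forces $\dot{g} + g\alpha = 0$, so the term that carries $u$ drops out entirely. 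This is the design principle behind \eqref{input_dyn}: it is what removes the obstruction highlighted in Remark (ii) for state-dependent $g(x)$.

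Next I would compute $\dot{V} = \dot{x}^\top M \ddot{x}$ using the simplified expression for $\ddot{x}$. Writing the first term symmetrically gives
\begin{equation*}
\dot{V} \;=\; \tfrac{1}{2}\dot{x}^\top\!\bigl(M\tfrac{\partial f}{\partial x} + \tfrac{\partial f}{\partial x}^{\!\top} M\bigr)\dot{x} \;-\; \dot{x}^\top M g\, g^\top M\dot{x} \;+\; \dot{x}^\top M g\,\dot{v}.
\end{equation*}
Assumption A1 renders the first summand nonpositive, and the second summand equals $-\|g^\top M\dot{x}\|^2 = -\|y\|^2 \le 0$. The remaining term is exactly $y^\top \dot{v}$, which establishes $\dot{V} \le y^\top \dot{v}$ and hence the claimed passivity.

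There is no deep obstacle here; the entire proof is a direct differentiation once the storage function and feedback law are in place. The main conceptual step is recognising that Assumption A2 (the left-annihilator condition) and the definition of $\alpha$ in \eqref{alpha_1} were tailored precisely to guarantee $\dot{g} + g\alpha = 0$ via Lemma \ref{prop::alpha}, which is what allows the Krasovskii function of the autonomous drift to be promoted to a storage function for the controlled system without requiring $g$ to be constant. Assumption A3 plays no role in this argument and is presumably used later to interpret/integrate the new port variable $\dot{v}$.
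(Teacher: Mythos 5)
Your proposal is correct and follows essentially the same route as the paper: the Krasovskii storage function $V=\tfrac12\dot x^\top M\dot x$, differentiation of \eqref{gen_sys}, substitution of \eqref{input_dyn}, the cancellation $\dot g+g\alpha=0$ from Lemma \ref{prop::alpha}, and Assumption A1 for the quadratic drift term. Your version is in fact slightly cleaner, since you carry the correct factor $\tfrac12$ in the symmetrized term and make explicit that $g\beta$ contributes $-\|y\|^2\le 0$, both of which the paper glosses over.
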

	\begin{proof}
	Consider  storage function of the form \eqref{kras_lypunov}. The time derivative of \eqref{kras_lypunov} along the trajectories of \eqref{gen_sys} and \eqref{input_dyn} is
	\begin{eqnarray*}
	\dfrac{d}{dt}V&=& \dot{x}^\top M\ddot{x}\\
	    &=& \dot{x}^\top M\left(\dfrac{\partial f}{\partial x}\dot{x}+\dot{g}u+g\dot{u}\right)\\
	    &=& \dot{x}^\top M\left(\dfrac{\partial f}{\partial x}\dot{x}+\dot{g}u+g\left(\alpha u+\beta +\dot v\right)\right)\\
	    &=& \dot{x}^{\top}\left(M\dfrac{\partial f}{\partial x}+\dfrac{\partial f}{\partial x}^\top M\right)\dot{x}\\&&+\dot{x}^\top M\left(\left(\dot{g}+g \alpha)\right)u+g\beta+g\dot v \right)\\
	    &\leq & \dot{v}^\top y
	\end{eqnarray*}
	where $y=g^\top M \dot{x}$ is also referred to as power shaping output. In step 1 and 2 we use system dynamics \eqref{gen_sys} and controller dynamics \eqref{input_dyn} respectively. In step 4 and 5 we used Proposition \ref{prop::Contrating} and lemma \ref{prop::alpha} respectively.
	\end{proof}
	\subsection{Control}
	The new passive maps obtained with differentiation at the port variables are further used for shaping the storage function. The controller is obtained are a result of the stability analysis treatment of the storage function.\\
	{\em Control objective:} To stabilize the system \eqref{gen_sys} at an non-trivial operating point $(x^{\ast},u^{\ast})$ satisfying
	\begin{equation}\label{Control_objective}
	    f(x^\ast)+g(x^\ast)u^\ast =0
	\end{equation}
	\begin{lemma}\label{prop::output_integrability}
	The output $y=g^\top M \dot{x}$ given in Theorem \eqref{thm::main} is integrable.
	\end{lemma}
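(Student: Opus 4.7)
The plan is to observe that this lemma is a direct consequence of assumption A3 together with the symmetry of $M$. The output in question is $y = g^\top(x) M \dot{x}$, which lies in $\mathbb{R}^m$. For $y$ to be integrable we must exhibit a function $F(x):\mathbb{R}^n \to \mathbb{R}^m$ whose total time derivative along trajectories of \eqref{gen_sys} equals $y$, i.e.
\begin{equation*}
\dfrac{d}{dt}F(x) \;=\; \dfrac{\partial F}{\partial x}\dot{x} \;=\; g^\top(x)\,M\,\dot{x}.
\end{equation*}
Matching coefficients of $\dot{x}$ (which is arbitrary as we range over initial conditions and inputs), this reduces to finding $F$ with Jacobian $\partial F/\partial x = g^\top(x) M$, or equivalently, since $M = M^\top$, with $(\partial F/\partial x)^\top = M g(x)$.

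First I would state precisely what A3 means: the $n\times m$ matrix-valued map $x \mapsto M g(x)$ is integrable in the sense that there exists a smooth $F:\mathbb{R}^n\to \mathbb{R}^m$ such that each column of $Mg(x)$ is the gradient of the corresponding component $F_i(x)$. Concretely, denoting by $g_i(x)$ the $i$-th column of $g(x)$, integrability of $Mg$ yields scalar functions $F_i(x)$ with $\nabla F_i(x) = M g_i(x)$ for $i=1,\dots,m$. Stacking these gives $(\partial F/\partial x)^\top = Mg(x)$, so $\partial F/\partial x = g^\top(x) M$.

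Second, I would close the argument by computing
\begin{equation*}
\dfrac{d}{dt} F(x(t)) \;=\; \dfrac{\partial F}{\partial x}\,\dot{x} \;=\; g^\top(x)\,M\,\dot{x} \;=\; y,
\end{equation*}
which exhibits $F(x)$ as a primitive of $y$ along system trajectories. In particular $\int_{0}^{t} y(\tau)\,d\tau = F(x(t)) - F(x(0))$ depends only on the endpoints of the trajectory in state space, establishing integrability.

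The only subtle point, and hence the main thing worth articulating carefully, is the reconciliation of the two natural readings of ``integrable'': a matrix-valued map is integrable when it is the Jacobian of some vector-valued function (equivalently, each of its columns, viewed as a 1-form, is closed/exact on the domain). Given assumption A3, this is built in; no computation of curls or Poincaré-lemma argument is needed inside the lemma itself, because A3 was postulated precisely to supply this potential $F$.
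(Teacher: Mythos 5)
Your proof is correct and is essentially the paper's own argument: both unpack assumption A3 to produce a potential $\Gamma$ (your $F$) with Jacobian $g^\top M$, and then observe that $\dot{\Gamma}=g^\top M\dot{x}=y$. The only difference is cosmetic: the paper reads A3 as a closedness condition and invokes Poincar\'e's Lemma on $\mathbb{R}^n$ to obtain the potential, whereas you read A3 as directly furnishing it; either reading yields the same one-step proof.
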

	\begin{proof}
	From Assumption A3, we have that the function $Mg(x)$ is integrable, Poincare's Lemma ensures the existance of a function $\Gamma(x):\mathbb{R}^n\rightarrow \mathbb{R}^n$ such that
	\begin{equation}
	\dot{\Gamma}=(MG)^\top \dot{x}.
	\end{equation}
	\end{proof}
	By exploiting the integrability property of the output, the authors in \cite{chinde2016building}, have presented a methodology to construct the closed loop storage function whose minimum is at the desired operating point.
	Consider the storage function of the form
	\begin{equation}\label{clP_str}
	    V_d(x)=\dfrac{1}{2}k_1\dot{x}^\top M \dot{x}+\dfrac{1}{2}||\Gamma(x)-\Gamma(x^\ast)||^2_{k_i}.
	\end{equation}
	\begin{figure}
	\centering
	\includegraphics[width=1\linewidth]{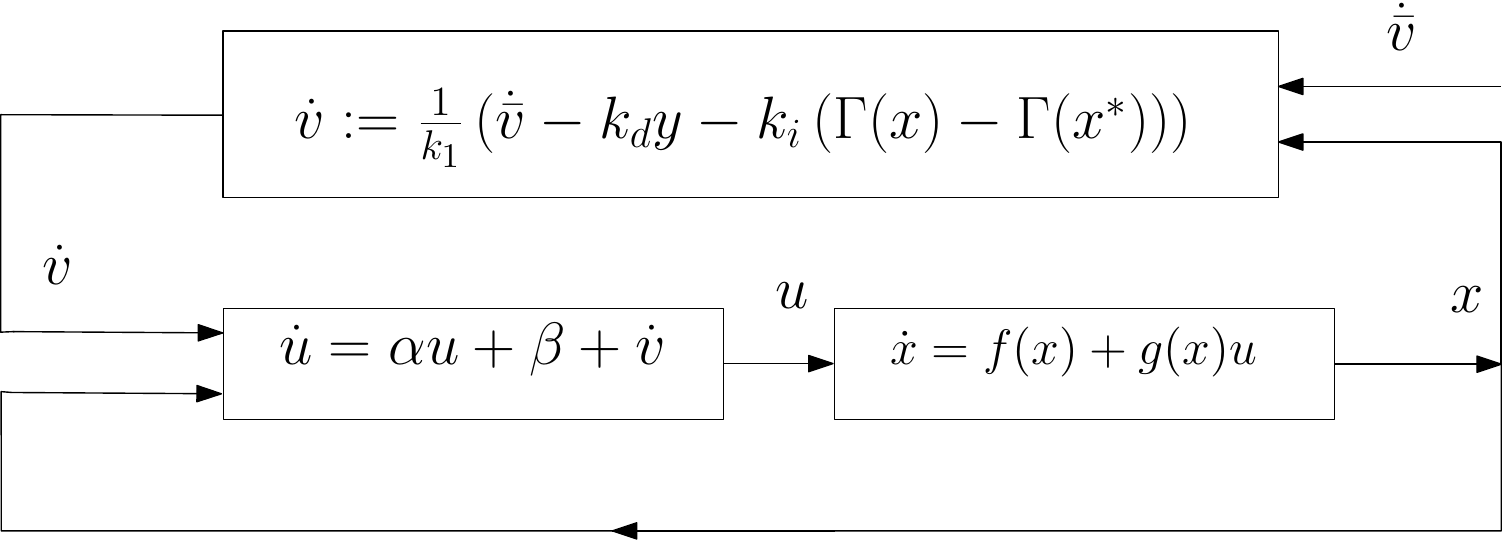}
	\caption{Interconnecting the controller \eqref{input_v_dyn} to dynamic state feedback system in Fig. \ref{fig:dyn_state_feedback}.}
	\label{fig:dyn_state_feedback_controller}
\end{figure}
	\begin{proposition}\label{porp::control}
	Consider system \eqref{gen_sys} together with \eqref{input_dyn} satisfying assumptions A1, A2 and A3. We define the mapping $v:\mathbb{R}^n\rightarrow\mathbb{R}^m$
	\begin{equation}\label{input_v_dyn}
	    \dot v:=\dfrac{1}{k_1}\left(\dot{\bar{v}}-k_d y-k_i \left(\Gamma(x)-\Gamma(x^{\ast})\right)\right)
	\end{equation}
	where $y=g^\top M \dot{x}$. Then the system of equation \eqref{gen_sys} and \eqref{input_dyn} are passive with port variables $\dot{\bar{v}}$ and $y$ (see Fig. \ref{fig:dyn_state_feedback_controller}). Further for $\dot{\bar{v}}=0$, the system is stable and $x^\ast$  as the stable equilibrium point. Furthermore if $y=0 \implies \lim_{t\rightarrow \infty}x(t)\rightarrow x^{\ast}$, then $x^\ast$ is asymptotically stable.
	\end{proposition}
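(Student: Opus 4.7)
The plan is to show that $V_d$ from \eqref{clP_str} serves as a storage function for the closed-loop interconnection of \eqref{gen_sys}, \eqref{input_dyn} and \eqref{input_v_dyn}, and that its time derivative decomposes cleanly as a supply term in the new port variables $(\dot{\bar v},y)$ minus an explicit dissipation term. First I would differentiate $V_d$ along the closed-loop trajectories. The first term $\tfrac{1}{2}k_1 \dot x^\top M \dot x$ is exactly $k_1$ times the storage function of Theorem \ref{thm::main}, so its derivative is bounded above by $k_1 \dot v^\top y$, where $y=g^\top M\dot x$. For the second term, Lemma \ref{prop::output_integrability} identifies $\dot\Gamma = (Mg)^\top \dot x = y$, so its derivative equals $k_i(\Gamma(x)-\Gamma(x^\ast))^\top y$.

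Next I would substitute the dynamic feedback \eqref{input_v_dyn} into $k_1\dot v^\top y$, which yields
\begin{equation*}
k_1 \dot v^\top y \;=\; \dot{\bar v}^\top y - k_d\,y^\top y - k_i(\Gamma(x)-\Gamma(x^\ast))^\top y .
\end{equation*}
The $k_i$ cross term cancels exactly against the contribution from $\tfrac{1}{2}\|\Gamma(x)-\Gamma(x^\ast)\|^2_{k_i}$, leaving
\begin{equation*}
\dot V_d \;\leq\; \dot{\bar v}^\top y - k_d\,\|y\|^2 \;\leq\; \dot{\bar v}^\top y ,
\end{equation*}
which establishes passivity with the new ports $(\dot{\bar v},y)$ as required.

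To conclude stability in the autonomous case $\dot{\bar v}=0$, I would note that $V_d(x^\ast)=0$ (since $f(x^\ast)+g(x^\ast)u^\ast=0$ makes $\dot x$ vanish at equilibrium and $\Gamma(x^\ast)-\Gamma(x^\ast)=0$), while $V_d \geq 0$ with $\dot V_d \leq -k_d\|y\|^2 \leq 0$. This yields Lyapunov stability at $x^\ast$. For asymptotic stability, I would invoke LaSalle's invariance principle: trajectories converge to the largest invariant subset of $\{x : y=0\}$, and the standing hypothesis $y\equiv 0 \Rightarrow x(t)\to x^\ast$ collapses this set to the equilibrium.

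The main obstacle I anticipate is justifying that $V_d$ has a strict local minimum at $x^\ast$, which requires $\Gamma$ to separate states near $x^\ast$ in the co-directions where $\dot x$ vanishes; this hinges on the rank properties of $Mg$ at $x^\ast$ inherited from assumption A3, and in a purely local argument it is what makes the Lyapunov conclusion rigorous. Beyond that, the detection condition $y=0\Rightarrow x(t)\to x^\ast$ is simply assumed, so the asymptotic step reduces to a standard invariance argument once passivity has been established.
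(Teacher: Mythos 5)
Your proposal is correct and follows essentially the same route as the paper: differentiate $V_d$, bound the Krasovskii term by $k_1\dot v^\top y$ via Theorem \ref{thm::main}, use $\dot\Gamma=y$ to cancel the $k_i$ cross term after substituting \eqref{input_v_dyn}, and conclude with $\dot V_d\leq \dot{\bar v}^\top y$ and a LaSalle-type argument when $\dot{\bar v}=0$. Your closing remark about needing $V_d$ to be positive definite (a strict minimum at $x^\ast$) is a point the paper's proof silently skips, so it is a worthwhile caveat, but it does not change the argument's structure.
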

	\begin{proof}
	The time derivative of the closed loop storage function \eqref{clP_str} is
	\begin{eqnarray*}
	\dfrac{d}{dt}V_d&=& k_1\dot{V}+y^\top k_i(\Gamma(x)-\Gamma(x^\ast))\\
	&\leq & y^\top\left(k_1\dot{v}+ k_i(\Gamma(x)-\Gamma(x^\ast))\right)\\
	&\leq & y^\top \dot{\bar{v}}
	\end{eqnarray*}
		This proves that the closed loop system is passive with storage function $V_d$, input $\dot{\bar{v}}$ and output $y$. Further for $\dot{\bar{v}}=0$ we have 
	\begin{equation*}
	    \dot{V}_d\leq -k_d y^\top y
	\end{equation*}
	and at equilibrium $x=x^\ast$ we have $\dot{v}=0$, further using this in \eqref{input_dyn} we can show that $\dot u=0$. This implies $(x^\ast,y^\ast)$ satisfy the control objective \eqref{Control_objective}, further concluding that system \eqref{gen_sys} is asymptotically stable with Lyapunov function $V_d$ and $x^\ast$ as the equilibrium point.
	\end{proof}
	\begin{remark}
Note the following.
		\begin{itemize}
	    \item[(1)] At the desired operating point one can show that  $\dot{u}-\alpha u-\beta=0$.  Hence, we have considered $\dot{u}= \alpha u+\beta +\dot{v} $, instead of $\dot{u}= \alpha u+\beta +v$ in equation \eqref{input_dyn}.
	    \item[(2)] Systems that are contracting always forget their initial conditions. That is, their final behaviour is always independent of the initial conditions. Hence, one need not worry about the initial conditions of the control input $u$ while implementing the control law \eqref{input_dyn} together with \eqref{input_v_dyn}.
	\end{itemize}
	\end{remark}
\section{Illustrative example: Temperature regulation of a building thermal zone}\label{buildingzone}
Thermal zone is an important component of heating ventilating and air conditioning (HVAC) subsystem. Although, there are different zone modeling strategies, for control purpose, lumped parameter models are commonly used \cite{ma2012predictive}. Lumped parameter models have resistance-capacitance (RC) interconnected network which represents interaction between zones and between zone and ambient. The capacitances represent the total thermal capacity of the wall, zone, and the resistances are
used to represent the total resistance that the wall offers to
the flow of heat from one side to other. To illustrate the proposed approach, we consider a simple two-zone case separated by a wall, where the surface is modeled as a 3R2C \cite{deng2010building} network as shown in Fig. \ref{fig:zonemodel}.
\begin{figure}[h!]
	\centering
	\includegraphics[width=1\linewidth]{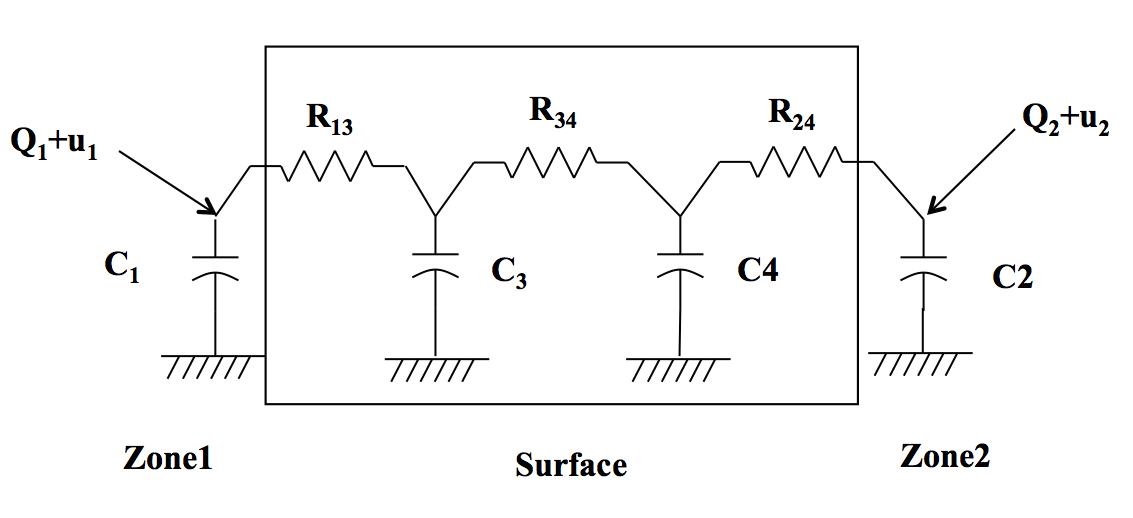}
	\caption{Lumped RC network model: Two zone case}
	\label{fig:zonemodel}
\end{figure}
	The nonlinear thermal model for the two zone case is given by \cite{chinde2016building}
	\begin{eqnarray}\label{dyn}
	C_1\dot{T_1}&=&\dfrac{T_3-T_1}{R_{31}}+\dfrac{(T_{\infty}-T_1)}{R_{10}}+u_1c_p(T_s-T_1)\nonumber\\		C_2\dot{T_2}&=&\dfrac{T_4-T_2}{R_{42}}+\dfrac{(T_{\infty}-T_2)}{R_{10}}+u_2c_p(T_s-T_2)\nonumber\\			
	C_3\dot{T_3}&=&\dfrac{T_1-T_3}{R_{13}}+\dfrac{(T_{4}-T_3)}{R_{34}}\\	C_4\dot{T_4}&=&\dfrac{T_2-T_4}{R_{42}}+\dfrac{(T_{3}-T_4)}{R_{34}}\nonumber
	\end{eqnarray}
In the above model, the inputs $u_1$ and $u_2$ denotes the mass flow rates. $T_{\infty}$, $T_s$ are ambient and supply air temperatures. Note that the inputs are coupled with the state (Temperatures $T_{1}$,$T_{2}$).
Denote the following:
\begin{eqnarray}\label{alphabeta}
\alpha=\begin{bmatrix}
	\frac{\dot{T}_1}{(T_s-T_1)} &0\\0&\frac{\dot{T}_2}{(T_s-T_2)}
\end{bmatrix},\text{and}\;\; \beta=\begin{bmatrix}
c_p(T_1-T_s)\dot{T}_1\\c_p(T_2-T_s)\dot{T}_2
\end{bmatrix}.
\end{eqnarray}
\begin{proposition}
The systems of equations \eqref{dyn}, and \eqref{input_dyn} with $\alpha$ and $\beta$ defined as in \eqref{alphabeta}, are passive with port variables $\dot{v}$ and $y$. where
\begin{eqnarray}
y(T)&=& c_p\begin{bmatrix}
\left(T_s-T_1\right)\dot{T}_1\\ \left(T_s-T_1\right)\dot{T}_2
\end{bmatrix}.
\end{eqnarray}
    
\end{proposition}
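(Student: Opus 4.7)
The plan is to cast the two-zone thermal model as a special case of the general nonlinear system \eqref{gen_sys} and then invoke Theorem \ref{thm::main} directly. Setting $x = (T_1, T_2, T_3, T_4)^\top$ and $u = (u_1, u_2)^\top$, the drift $f(x)$ collects all the RC conduction terms (which are linear in the temperatures together with the constant forcing by $T_\infty$), while the input matrix is
\begin{equation*}
g(x) = \begin{bmatrix} c_p(T_s-T_1)/C_1 & 0 \\ 0 & c_p(T_s-T_2)/C_2 \\ 0 & 0 \\ 0 & 0 \end{bmatrix}.
\end{equation*}

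Next I would verify the three standing assumptions with the natural metric $M = \mathrm{diag}(C_1,C_2,C_3,C_4)$. For A1, the product $M\,\partial f/\partial x$ has the structure of the negative of a weighted graph Laplacian with conductances $1/R_{ij}$ off the diagonal and the sums of incident conductances (including the connections to $T_\infty$ through $R_{10}$) on the diagonal. It is therefore symmetric and strictly negative definite because dissipation to ambient grounds the Laplacian, so $M\partial f/\partial x + \partial f/\partial x^\top M = 2M\partial f/\partial x < 0$. For A2, a left annihilator is $g^{\perp} = \begin{bmatrix} 0 & 0 & 1 & 0 \\ 0 & 0 & 0 & 1 \end{bmatrix}$, and since all entries of $\partial g/\partial x$ vanish on rows three and four, $g^{\perp}\,\partial g/\partial x = 0$. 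For A3, each column of $Mg$ is $(c_p(T_s-T_i), 0, 0, 0)^\top$ in its respective slot, which is the gradient of $-\tfrac{c_p}{2}(T_s-T_i)^2$, hence integrable, giving the Casimir-like function $\Gamma$ that will later be used for shaping.

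The remaining step is to identify $\alpha$ and $\beta$. A direct computation of $(g^\top g)^{-1}g^\top \dot g$ gives $-\mathrm{diag}\bigl(\dot T_1/(T_s-T_1),\,\dot T_2/(T_s-T_2)\bigr)$, which matches the $\alpha$ stated in \eqref{alphabeta} after negation per Lemma \ref{prop::alpha}. Computing $g^\top M$ collapses the $C_i$ factors and yields $g^\top M\dot x = c_p\bigl((T_s-T_1)\dot T_1,\,(T_s-T_2)\dot T_2\bigr)^\top$, so $\beta = -g^\top M\dot x$ agrees with \eqref{alphabeta} up to the trivial sign rewriting $T_s - T_i \to -(T_i - T_s)$. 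The power-shaping output $y = g^\top M\dot x$ then reads exactly as the $y(T)$ stated in the proposition (the second entry being $(T_s-T_2)\dot T_2$, correcting an apparent typographical $T_1$).

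With all hypotheses of Theorem \ref{thm::main} verified and $\alpha,\beta,y$ matched to the formulas in the general framework, passivity of the interconnection with port variables $\dot v$ and $y$ follows immediately by taking the Krasovskii storage $V = \tfrac{1}{2}\dot x^\top M\dot x$. The only nontrivial step conceptually is A1, i.e.\ checking that the conductance--capacitance Laplacian is strictly negative definite; this is where the grounding resistances $R_{10}$ to ambient are essential, since without them the Laplacian would only be negative semidefinite and A1 would fail to hold strictly. Everything else reduces to algebra that can be read off directly from the structure of $g$.
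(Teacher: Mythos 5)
Your proposal is correct and follows essentially the same route as the paper: verify A1 with $M=\mathrm{diag}(C_1,C_2,C_3,C_4)$, verify A2 via the left annihilator $g^{\perp}=\begin{bmatrix}0&0&1&0\\0&0&0&1\end{bmatrix}$, compute $\alpha$ and $\beta$ from Lemma \ref{prop::alpha}, and invoke Theorem \ref{thm::main} with the Krasovskii storage $V=\tfrac{1}{2}\dot{x}^\top M\dot{x}$. Your additional justification of A1 via the grounded conductance Laplacian, and your note that the second entry of $y(T)$ should read $(T_s-T_2)\dot{T}_2$, are both accurate refinements of what the paper leaves implicit.
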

\begin{proof}
Let $C=\text{diag}\left\{C_1,C_2,C_3,C_4\right\}$.
 One can prove that the system \eqref{dyn} satisfies assumption (A1) given in equation \eqref{A1} by choosing $M=\text{diag}\left\{C_1,C_2,C_3,C_4\right\}$. \\
 The input matrix of \eqref{dyn} is $g(T)=[g_1(T),g_2(T)]$, where 
\begin{eqnarray*}
g_1(T)&=&\begin{bmatrix}
\dfrac{c_p}{C_1}(T_s-T_1)&0&0&0
\end{bmatrix}^\top,\\
g_2(T)&=&\begin{bmatrix}
0&\dfrac{c_p}{C_2}(T_s-T_2)&0&0
\end{bmatrix}^\top.
\end{eqnarray*}
Using left annihilator of $g(T)$, that is 
$$g^{\perp}(T)=\begin{bmatrix}0&0&1&0\\ 0&0&0&1\end{bmatrix}$$
one can show that
\begin{eqnarray}\label{inpu_matrix}
\begin{matrix}
g^{\perp}\dfrac{\partial g_1 }{\partial T}=0 &
g^{\perp}\dfrac{\partial g_2 }{\partial T}=0
\end{matrix}
\end{eqnarray}
Hence the input matrix $g(T)$ satisfies assumption A2. Now, we can use Proposition \eqref{prop::alpha} and show that $\alpha$ takes the same form, given in \eqref{alphabeta}. Finally from Theorem \ref{thm::main}, using
 \begin{eqnarray}\label{Storage_fun_1}
V(T)&=&\dfrac{1}{2}\dot{T}^\top M\dot{T}\\
&=&\dfrac{1}{2}\left(C_1\dot{T}_1^2+C_2\dot{T}_2^2+C_3\dot{T}_3^2+C_4\dot{T}_4^2\right)\nonumber
\end{eqnarray}
as storage function, the system of equations \eqref{dyn}, together with input dynamics \eqref{input_dyn} given by
\begin{eqnarray}\label{udot}
\begin{matrix}
\dot{u}_1=\left(\dfrac{u_1}{(T_s-T_1)}-c_p(T_s-T_1)\right)\dot{T}_1+\dot{v}_1\\
\dot{u}_2=\left(\dfrac{u_2}{(T_s-T_2)}-c_p(T_s-T_1)\right)\dot{T}_1+\dot{v}_2
\end{matrix}
\end{eqnarray}
are passive with port variables $\dot{v}$ and $y$.
\end{proof}
Now we can consider $v=[v_1,v_2]^\top$ as input for the combined equations \eqref{dyn}, \eqref{udot} and provide a control strategy using Proposition \eqref{porp::control}. Consider $a_1=(T_1^\ast-T_s)^2$, $a_2=(T_2^\ast-T_s)^2$, $k_d\geq0$ and $k_i>0$.
\begin{proposition}
The state feedback controller
\begin{eqnarray}\label{Control_port}
\begin{matrix}
\dot{v}_1\hspace{-3mm}&=&\hspace{-3mm}-k_dc_p\left(T_s-T_1\right)\dot{T}_1+\dfrac{1}{2}k_ic_p\left(\left(T_s-T_1\right)^2-a_1\right)\\ 
\dot{v}_2\hspace{-3mm}&=&\hspace{-3mm}-k_dc_p\left(T_s-T_1\right)\dot{T}_2+\dfrac{1}{2}k_ic_p\left(\left(T_s-T_2\right)^2-a_2\right)
\end{matrix}
\end{eqnarray}
asymptotically stabilizes the system of equations \eqref{dyn} and \eqref{udot} to the operating point $(T^\ast,u^\ast)$ satisfying \eqref{Control_objective}.
\end{proposition}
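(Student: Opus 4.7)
The plan is to recognize that this proposition is just the concrete instantiation of Proposition~\ref{porp::control} for the thermal zone model, with the specific form of $\Gamma(x)$ computed and substituted into the general controller~\eqref{input_v_dyn}. The previous proposition already established that the combined system \eqref{dyn}+\eqref{udot} satisfies A1 and A2 and is passive with ports $(\dot{v},y)$, so only Assumption A3 (integrability of $Mg$) and the identification of $\Gamma$ remain to apply the abstract result.

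First I would verify A3 directly. With $M=\mathrm{diag}\{C_1,C_2,C_3,C_4\}$, the columns $Mg_1=[c_p(T_s-T_1),0,0,0]^\top$ and $Mg_2=[0,c_p(T_s-T_2),0,0]^\top$ are trivially integrable since each depends on only the state whose row it occupies. Appealing to Lemma~\ref{prop::output_integrability} (Poincaré), the potential function is obtained componentwise by antidifferentiation:
\begin{equation*}
\Gamma(T)=-\tfrac{c_p}{2}\bigl[(T_s-T_1)^2,\;(T_s-T_2)^2\bigr]^\top,
\end{equation*}
so that $\dot{\Gamma}=(Mg)^\top\dot{T}=y$. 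Consequently,
\begin{equation*}
\Gamma(T)-\Gamma(T^\ast)=-\tfrac{c_p}{2}\bigl[(T_s-T_1)^2-a_1,\;(T_s-T_2)^2-a_2\bigr]^\top.
\end{equation*}

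Next, I would specialize \eqref{input_v_dyn} with $k_1=1$ and the external input $\dot{\bar{v}}=0$, obtaining $\dot{v}=-k_d y-k_i(\Gamma(T)-\Gamma(T^\ast))$. Substituting the explicit $\Gamma$ computed above and the explicit $y$ from the preceding proposition yields exactly the two scalar expressions stated in \eqref{Control_port}, with the $+\tfrac{1}{2}k_i c_p$ sign emerging from the minus sign carried by $\Gamma$. This matching establishes that the proposed feedback is precisely the controller produced by the general construction, so Proposition~\ref{porp::control} applies: the closed-loop storage function $V_d$ in \eqref{clP_str} is nonincreasing with $\dot V_d\le -k_d\,y^\top y$, giving Lyapunov stability of $T^\ast$.

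The main obstacle will be upgrading stability to asymptotic stability via the detectability clause of Proposition~\ref{porp::control}, namely $y\equiv 0 \Rightarrow T(t)\to T^\ast$. On the invariant set $\{y=0\}$ we have $(T_s-T_i)\dot{T}_i=0$ for $i=1,2$, and since $V_d$ is constant on this set the radial term forces $(T_s-T_i)^2=a_i=(T_s-T_i^\ast)^2$, leaving the two algebraic roots $T_i=T_i^\ast$ or $T_i=2T_s-T_i^\ast$. I would exclude the spurious root by restricting attention to the physically meaningful operating region (zone temperatures on the same side of $T_s$ as $T_i^\ast$, where $(T_s-T_i)$ does not change sign), which also keeps $\alpha$ in \eqref{alphabeta} well defined. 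On this region the invariant set reduces to $T_1=T_1^\ast,\ T_2=T_2^\ast$, which by \eqref{dyn} forces $\dot T_3=\dot T_4=0$ at $T_3=T_3^\ast,\ T_4=T_4^\ast$; LaSalle's invariance principle then yields $T(t)\to T^\ast$, and the consistency at equilibrium $(\dot v=0,\ \dot u=0)$ shown in the proof of Proposition~\ref{porp::control} gives $u\to u^\ast$, completing the argument.
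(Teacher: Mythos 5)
Your overall strategy --- verify A3, compute $\Gamma(T)=-\tfrac{c_p}{2}\left[(T_s-T_1)^2,\ (T_s-T_2)^2\right]^\top$, and recognize \eqref{Control_port} as the instantiation of \eqref{input_v_dyn} with $k_1=1$, $\dot{\bar v}=0$ --- is exactly the first half of the paper's proof, and that part is correct. The genuine gap is in your detectability step. Working only from $\dot V_d\le -k_d\,y^\top y$, the invariant set you must analyze lies in $\{y\equiv 0\}$, and there your claim that ``since $V_d$ is constant on this set the radial term forces $(T_s-T_i)^2=a_i$'' is a non sequitur: constancy of $V_d$ in \eqref{clP_str} only constrains the \emph{sum} of the kinetic and radial terms, and $y\equiv0$ only gives $\dot\Gamma=0$, i.e.\ that $(T_s-T_i)^2$ equals \emph{some} constant, not that it equals $a_i$. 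To pin down $T_i=T_i^\ast$ from $y\equiv0$ you would have to propagate the constraint through the closed loop ($\Gamma$ constant $\Rightarrow T_1,T_2$ constant $\Rightarrow \dot v_i=\tfrac12 k_i c_p\left((T_s-T_i)^2-a_1\right)$ constant $\Rightarrow \dot u_i$ constant, then use rows 1--2 of \eqref{dyn} and boundedness of $T_3,T_4$ to force $\dot u_i=0$); none of that is in your writeup.

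The paper avoids this difficulty by not discarding the open-loop dissipation: the direct computation of $\dot V_d$ along \eqref{dyn}, \eqref{udot}, \eqref{Control_port} keeps the negative quadratic form contributed by the thermal-resistance part of the drift, giving $\dot V_d\le -\alpha\left((\dot T_1-\dot T_3)^2+(\dot T_2-\dot T_4)^2+(\dot T_3-\dot T_4)^2+\dot T_1^2+\dot T_2^2\right)$, which is positive definite in $\dot T$. Hence $\dot V_d=0$ yields $\dot T\equiv0$ immediately, then $u$ constant, $\dot v=0$, and $(T_s-T_i)^2=a_i$ from \eqref{Control_port}, with LaSalle closing the argument. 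You should either adopt that route or supply the missing chain above. One point in your favor: you correctly observe that $(T_s-T_i)^2=a_i$ admits the spurious root $T_i=2T_s-T_i^\ast$, so a restriction to an operating region where $T_i-T_s$ does not change sign is needed to conclude $T_i=T_i^\ast$; the paper's own proof passes over this silently.
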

\begin{proof}
With $M=\text{diag}\{C_1,C_2,C_3,C_4\}$ and input matrix $g(T)$ in \eqref{inpu_matrix}, one can verify assumption A3. Hence from Proposition  \ref{prop::output_integrability}, we can show that 
\begin{eqnarray}\label{gamma}
\Gamma(T)=-\dfrac{1}{2}c_p\begin{bmatrix}(T_1-T_s)^2\\(T_2-T_s)^2\end{bmatrix} 
\end{eqnarray}
satisfies $\dot{\Gamma}(T)=y(T)$. Further proof directly follows from Proposition \ref{porp::control} using $\Gamma(T)$ in \eqref{gamma}. It can also be proved by taking the time derivative of Lyapunov function \eqref{clP_str} along the trajectories of \eqref{dyn} and \eqref{udot} as shown below
\begin{eqnarray*}
\dot{V}_d&=&k_1\dot{T}^\top M\ddot{T}+k_i(\Gamma(T)-a)^\top \dot{\Gamma}(T)\\
&=& -\dfrac{k_1}{R_{13}}\left(\dot{T}_1-\dot{T}_3\right)^2-\dfrac{k_1}{R_{24}}\left(\dot{T}_2-\dot{T}_4\right)^2\\&&-\dfrac{k_1}{R_{34}}\left(\dot{T}_3-\dot{T}_4\right)^2-\dfrac{k_1}{R_{10}}\left(\dot{T}_1^2+\dot{T}_2^2\right)\\
&&+\dot{T}^\top M\dfrac{d}{dt}\left(g(T)u\right)+k_i(\Gamma(T)-a)^\top y(T)\\
&\leq &\dot{T}^\top\left(\dot{g}u+g\dot{u}\right)+k_i(\Gamma-a)^\top y\\
&= &\dot{T}^\top M\left(\dot{g}u+g(\alpha u+\beta+v)\right)+k_i(\Gamma-a)^\top y\\
&\leq &\dot{T}^\top M\left((\dot{g}+g\alpha)u+gv\right)+k_i(\Gamma-a)^\top y\\
&=& \dot{T}^\top Mg v+k_i(\Gamma-a)^\top y\\
&=& y^\top\left( v+k_i(\Gamma-a)\right)\\
&=& -k_dy^\top y.
\end{eqnarray*}
In step 2 and 4 we use system dynamics \eqref{dyn} and controller dynamics respectively. In step 5 we used $\dot{g}+g\alpha=0$ given in Proposition \ref{stab_cond}. Finally in step 6 we have used the control strategy \eqref{Control_port}. Now one can infer that there exist an $\alpha>0$, such that 
\begin{eqnarray*}
\dot{V}_d&\leq& -\alpha \left(\left(\dot{T}_1-\dot{T}_3\right)^2+\left(\dot{T}_2-\dot{T}_4\right)^2+\left(\dot{T}_3-\dot{T}_4\right)^2\right.\\&&\left.+\dot{T}_1^2+\dot{T}_2^2\right).
\end{eqnarray*}
 $\dot{V}_d=0$ implies $\dot{T}_1$, $\dot{T}_2$, $\dot{T}_3$ and $\dot{T}_4$ are identically zero. Using this in \eqref{dyn}, we get $u_1$ and $u_2$ as constant. From \eqref{udot} we get $v=0$, substituting this in \eqref{Control_port} we get that $T_1=T_1^\ast$, and $T_2=T_2^\ast$. Finally, we conclude the proof by invoking LaSalle's invariance principle.
\end{proof}
{\em Simulation results:}
In order to illustrate the efficacy of the proposed approach an illustrate example of building thermal zone model is considered. The description of building zone model and the controller design are detailed in Section \ref{buildingzone}. The parameter values used for the simulation study is given in \cite{deng2010building}. The trajectories of zone temperatures for the two zone case is shown in Fig. \ref{fig:tres} and the effectiveness of controller is shown by zone temperatures reach their respective reference temperature values. The control inputs to the zones and the time evolution of port variables is shown in Fig. \ref{fig:control} and Fig. \ref{fig:ports}. Zone 2 needs higher control effort to reach reference temperature compared to zone 1 due to the higher difference in initial and reference values.
\begin{figure}[h!]
	\centering
	\includegraphics[width=0.9\linewidth]{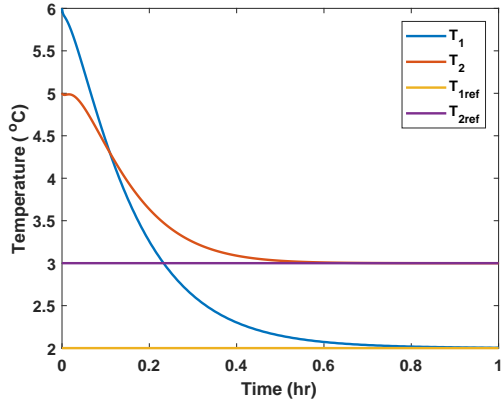}
	\caption{Trajectories of zone temperatures ($T_{\text{1ref}}=2.5$, $T_{\text{2ref}}=6$)}
	\label{fig:tres}
\end{figure}
\begin{figure}[h!]
	\centering
	\includegraphics[width=0.9\linewidth]{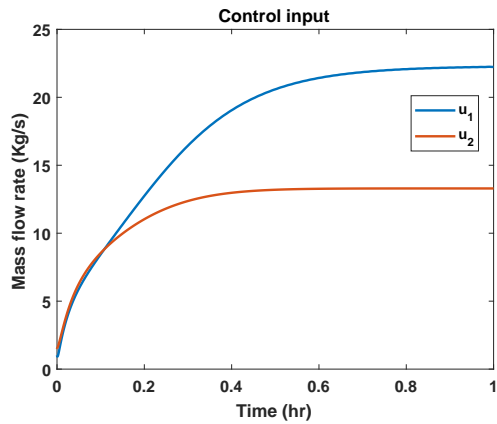}
	\caption{Time evolution of mass flow rate $u$.}
	\label{fig:control}
\end{figure}
\begin{figure}[h!]
	\centering
	\includegraphics[width=0.9\linewidth]{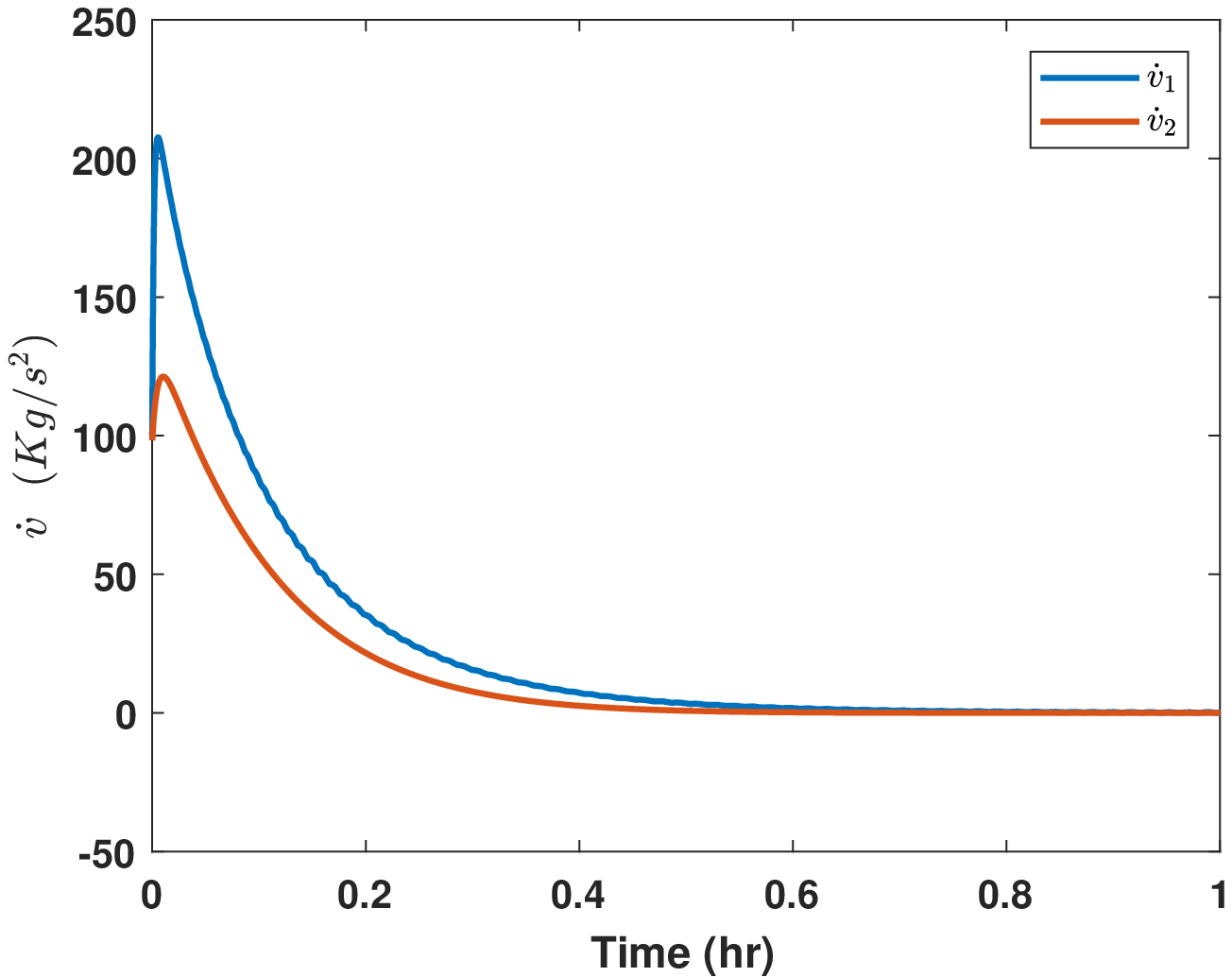}
	\caption{Time evolution of port variable $\dot{v}$ .}
	\label{fig:ports}
\end{figure}
	\section{Relations to differential and incremental passivity}\label{sec::diff_passvity}
	In this section we consider the prolonged system \cite{cortes2005characterization,crouch1987variational}, that is the original non-linear system together with its variational system.
	The prolonged system of \eqref{gen_sys} together with the variational version of input dynamics in equation \eqref{input_dyn} are
	\begin{eqnarray}\label{incre_extended_sys}
	    \dot{x}&=&f(x)+g(x)u\nonumber \\
	    \dot{\delta x} &=& \left(\dfrac{\partial f}{\partial x}+\dfrac{\partial g}{\partial x} u\right)\cdot \delta x+g(x)\delta u\\
	    \delta{u} &= &\alpha u+\beta +\delta v\nonumber
	\end{eqnarray}
	where $\delta x\in \mathbb{R}^n$, $\delta u\in \mathbb{R}^m$ denotes the variation in $x$ and $u$ respectively, 	$\alpha=-\left(g^\top g\right)^{-1}g^\top \dfrac{\partial g}{\partial x}\delta x$ and $\beta=-g^\top M \delta{x}$. Note that one can show $\dfrac{\partial g}{\partial x}\delta x+g \alpha =0$ using a similar procedure given in lemma \ref{prop::alpha}.
	\begin{proposition}
	The system of equations \eqref{incre_extended_sys} are passive with port variable $\delta y=g^\top M \delta x$ and $\delta v$. 
	\end{proposition}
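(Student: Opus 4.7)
The plan is to mimic the proof of Theorem \ref{thm::main} with $\dot x$ replaced by $\delta x$ throughout, using the variational Krasovskii-type storage function
$$W(x,\delta x)=\tfrac{1}{2}\,\delta x^\top M\,\delta x.$$
First I would differentiate $W$ along the prolonged dynamics, substituting the expression for $\dot{\delta x}$ from \eqref{incre_extended_sys} and then inserting the feedback $\delta u=\alpha u+\beta+\delta v$ with $\alpha=-(g^\top g)^{-1}g^\top\tfrac{\partial g}{\partial x}\delta x$ and $\beta=-g^\top M\delta x$. This yields three groups of terms: a drift quadratic form in $\delta x$, a cross term multiplying $u$, and the remaining contributions from $\beta$ and $\delta v$.

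Next I would dispatch each group in turn. The drift piece symmetrises to $\tfrac{1}{2}\delta x^\top(M\tfrac{\partial f}{\partial x}+\tfrac{\partial f}{\partial x}^\top M)\delta x$, which is non-positive by Assumption A1. The cross term collapses to $\delta x^\top M\bigl(\tfrac{\partial g}{\partial x}\delta x+g\alpha\bigr)u$, and this is exactly the place where I would invoke the variational analogue of Lemma \ref{prop::alpha} (which, as the authors remark just before the proposition, holds by the identical argument with $\dot x$ replaced by $\delta x$, using $g^\perp\tfrac{\partial g}{\partial x}=0$ from Assumption A2). That identity kills the $u$-dependence entirely. Finally, $\delta x^\top Mg\beta=-\|g^\top M\delta x\|^2\le 0$, and the remaining $\delta v$ contribution is precisely $(g^\top M\delta x)^\top\delta v=\delta y^\top\delta v$.

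Assembling these bounds gives $\dot W\le \delta y^\top\delta v$, which is the claimed passivity property with supply rate $\delta v^\top\delta y$ and storage function $W$. I expect the only substantive point needing care to be justifying the variational version of Lemma \ref{prop::alpha}; this is essentially a bookkeeping matter, since the derivation in that lemma uses only $g^\perp(\partial g/\partial x)=0$ contracted against a vector (there, $\dot x$; here, $\delta x$), so the full-rank argument with $[(g^\perp)^\top\;g]^\top$ carries over verbatim. Beyond that, the computation is a routine transcription of the proof of Theorem \ref{thm::main} into the variational setting, which also makes transparent why the new passive map is the natural differential counterpart of the Krasovskii passivity derived earlier.
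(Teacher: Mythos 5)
Your proposal is correct and follows essentially the same route as the paper: the same variational Krasovskii storage function $\tfrac{1}{2}\delta x^\top M\delta x$, the same substitution of the feedback $\delta u=\alpha u+\beta+\delta v$, Assumption A1 for the drift term, and the variational analogue of Lemma \ref{prop::alpha} (which the paper indeed asserts just before the proposition) to eliminate the $u$-dependent term. The only cosmetic difference is that you carry the factor $\tfrac{1}{2}$ in the symmetrized drift quadratic form, which the paper omits; this does not affect the sign argument.
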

	\begin{proof}
	Consider the following storage function,
	\begin{equation}\label{incre_storage}
	    V(x,\delta x) = \dfrac{1}{2}\delta x^\top M\delta x
	\end{equation}
	The time derivative of storage function \eqref{incre_storage} along the trajectories of \eqref{incre_extended_sys} is\\
	$\dfrac{d}{dt}V(x,\delta x)$
	\begin{eqnarray}\label{Vdot}
	    &=&\delta x^\top M \dot{\delta x}\nonumber\\
	    &=&\delta x^\top M \left(\left(\dfrac{\partial f}{\partial x}+\dfrac{\partial g}{\partial x} u\right)\cdot \delta x+g(x)\delta u\right)\nonumber\\
	    &=&\delta{x}^\top \left(M\dfrac{\partial f}{\partial x}+\dfrac{\partial f}{\partial x}^\top M\right)\delta x\nonumber\\
	    &&+ \left(\left(\dfrac{\partial g}{\partial x} \delta x+\delta x Mg \alpha \right)u+g(x)\left(\beta +\delta v\right)\right)\nonumber\\
	    &\leq & \delta x^\top M g \delta v
	    =\delta y^\top \delta v\nonumber
	\end{eqnarray}
	\end{proof}
		This approach shows that there are direct implications between dynamic feedback passivation and variational passivity.
	\begin{proposition}\label{Prop::incre_storage1}
	Consider system \eqref{incre_extended_sys}, with a smooth output $\bar y=h(x)\in \mathbb{R}^m$    and the input matrix $g(x)$ satisfies assumption A2. If there exist a  positive definite matrix $M$ satisfying assumption A1 i.e
	\begin{eqnarray}\label{incre_storage1}
	    M\dfrac{\partial f}{\partial x}+\dfrac{\partial f}{\partial x}^\top M &\leq& 0 ~~~\text{and}\\
	    Mg&=& \dfrac{\partial h}{\partial x}^\top .\nonumber
	\end{eqnarray}
	then the system \eqref{incre_extended_sys} is passive with port variable $\delta \bar y=g^\top M \delta x$ and $\delta v$. 
	\end{proposition}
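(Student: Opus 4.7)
My plan is to reuse the Krasovskii-type storage function from the preceding proposition,
\begin{equation*}
V(x,\delta x)=\frac{1}{2}\delta x^{\top}M\,\delta x,
\end{equation*}
which is a valid storage function in $\delta x$ because $M$ is positive definite. The argument is then a near-verbatim replay of the previous computation, with two small adaptations: A1 is now allowed to hold non-strictly, and the integrability-type identity $Mg=(\partial h/\partial x)^{\top}$ is used at the end to recast the supply term as a bilinear product involving the variation of the given nonlinear output $\bar y=h(x)$.

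Differentiating $V$ along \eqref{incre_extended_sys}, substituting the variational feedback $\delta u=\alpha u+\beta+\delta v$ with $\alpha=-(g^{\top}g)^{-1}g^{\top}(\partial g/\partial x)\delta x$ and $\beta=-g^{\top}M\delta x$, and regrouping, I would obtain three clusters. The symmetric cluster $\delta x^{\top}\!\left(M\,\partial f/\partial x+\partial f/\partial x^{\top}M\right)\delta x$ is $\leq 0$ by the relaxed A1. The input-coupled cluster $\delta x^{\top}M\bigl((\partial g/\partial x)\delta x+g\alpha\bigr)u$ vanishes identically, because the variational analogue of Lemma \ref{prop::alpha}, already noted after \eqref{incre_extended_sys}, yields $(\partial g/\partial x)\delta x+g\alpha=0$ under A2. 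The remaining cluster $\delta x^{\top}Mg(\beta+\delta v)$ splits into a non-positive damping piece $-\delta x^{\top}Mgg^{\top}M\delta x\leq 0$ coming from $\beta$, plus the supply term $\delta x^{\top}Mg\,\delta v$.

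The only step with any novelty is to identify the surviving supply term as $\delta\bar y^{\top}\delta v$. Using the hypothesis $Mg=(\partial h/\partial x)^{\top}$, I can write
\begin{equation*}
\delta x^{\top}Mg\,\delta v=\bigl((\partial h/\partial x)\delta x\bigr)^{\top}\delta v=\delta\bar y^{\top}\delta v,
\end{equation*}
since $\delta\bar y=(\partial h/\partial x)\delta x=g^{\top}M\delta x$ is the variation of the prescribed output. Assembling the three clusters then gives $\dot V\leq\delta\bar y^{\top}\delta v$, which is the claimed passivity inequality.

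I do not anticipate a real obstacle here: the argument is a bookkeeping exercise on top of the previous proposition. The only subtlety worth flagging is that the non-strict form of A1 produces only passivity (and not strict passivity, as before), which is consistent with the statement; and that the role of A2 is needed in two guises, once to kill the input-coupled cluster through the variational version of Lemma \ref{prop::alpha}, and implicitly to ensure consistency with the output map $h$ whose differential is aligned with $g^{\top}M$.
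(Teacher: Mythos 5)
Your proposal is correct and follows essentially the same route as the paper: the paper likewise reuses the Krasovskii storage function $V=\frac{1}{2}\delta x^{\top}M\delta x$, invokes the bound $\dot V\leq \delta x^{\top}Mg\,\delta v$ already established in the preceding proposition, and then applies $Mg=(\partial h/\partial x)^{\top}$ to rewrite the supply rate as $\delta\bar y^{\top}\delta v$. Your version merely spells out the intermediate clusters that the paper cites by reference to its earlier computation.
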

	\begin{proof}
	From equation \eqref{Vdot}, the time derivative of storage function \eqref{incre_storage} along the trajectories of \eqref{incre_extended_sys} is\\
	\begin{eqnarray*}
	    \dfrac{d}{dt}V(x,\delta x) \leq  \delta x^\top M g \delta v
	    =\delta x^\top  \dfrac{\partial h}{\partial x}^\top \delta v
	    =\delta \bar{y}^\top \delta v
	\end{eqnarray*}
	\end{proof}
	%
	\begin{remark}
	In the above Proposition \ref{Prop::incre_storage1}, $Mg= \frac{\partial h}{\partial x}^\top$ denotes assumption A3, that is, $Mg(x)$ is integrable. Further, if we consider $g(x)=B$, and $h(x)=Cx$, where $B\in \mathbb{R}^{n\times m}$ and $C\in \mathbb{R}^{m\times n}$ are constant, then we recover the conditions presented from incremental passivity in \cite{pavlov2008incremental}.
	\end{remark}
	\section{Conclusion}
%

	 In this paper, Krasovskii's method of Lyapunov function is used for stability analysis and control for a class of nonlinear dynamical systems. The use of such Lyapunov functions has led to new passive maps which is used for controller design. The proposed approach is tested on a building zone model and controller is designed to maintain the desired setpoint temperature.   In Section \ref{sec::diff_passvity}, we have shown that the prolonged system together with the input dynamics satisfying the sufficient conditions leads to differential passivity. The sufficient conditions also relate to incremental passivity conditions as shown in remark 4.3. There is a natural connection between dynamic feedback passivation and variational passivity, which the authors would like to explore in future work.  
	\bibliographystyle{IEEEtran}
	\bibliography{refs}
\end{document}